\documentclass[letterpaper, 10 pt, conference]{ieeeconf}  

\IEEEoverridecommandlockouts                              
\usepackage{amsmath,amssymb,amsfonts,bm, mathtools, dsfont}
\usepackage{xcolor}
\usepackage{cite}
\usepackage{tikz}
\usepackage{pgfplots}
\usepackage{pgfplotstable}
\usepackage{tabularray}
\usepackage{booktabs}
\usepackage{graphicx}
\usepackage{array}
\usepackage{pifont}   
\usepackage{threeparttable}
\usepackage{rotating}
\makeatletter
\let\MYcaption\@makecaption
\makeatother

\usepackage[font=footnotesize]{subcaption}

\makeatletter
\let\@makecaption\MYcaption
\makeatother
\newcommand{\cmark}{\ding{51}} 
\newcommand{\xmark}{\ding{55}} 

\usetikzlibrary{positioning}

\newtheorem{definition}{Definition}
\newtheorem{theorem}{Theorem}
\newtheorem{lemma}{Lemma}
\newtheorem{proposition}{Proposition}

\makeatletter
\def\@opargbegintheorem#1#2#3{\trivlist
  \item[]{#1\ #2\ (\itshape #3\bfseries): }\itshape}
\makeatother

\title{\LARGE \bf

Probably Approximately Correct (PAC) Guarantees for Data-Driven Reachability Analysis: A Theoretical and Empirical Comparison

}

\author{Elizabeth Dietrich$^*$, Hanna Krasowski$^*$, and Murat Arcak 
\thanks{$^{*}$Equal contribution.}
\thanks{E. Dietrich, H. Krasowski, and M. Arcak are with the University of California, Berkeley, USA. Email: {\tt\small \{eadietri, krasowski, arcak\}@berkeley.edu}.}
\thanks{This work was funded in part by the National Science Foundation under grant CNS-2111688. Elizabeth Dietrich was also supported by an NSF Graduate Research Fellowship.}
}

\begin{document}

\maketitle
\thispagestyle{empty}
\pagestyle{empty}

\begin{abstract}
Reachability analysis evaluates system safety, by identifying the set of states a system may evolve within over a finite time horizon. In contrast to model-based reachability analysis, data-driven reachability analysis estimates reachable sets and derives probabilistic guarantees directly from data. 
Several popular techniques for validating reachable sets---conformal prediction, scenario optimization, and the holdout method---admit similar Probably Approximately Correct (PAC) guarantees. 
We establish a formal connection between these PAC bounds and present an empirical case study on reachable sets to illustrate the computational and sample trade-offs associated with these methods. We argue that despite the formal relationship between these techniques, subtle differences arise in both the interpretation of guarantees and the parameterization. As a result, these methods are not generally interchangeable. We conclude with practical advice on the usage of these methods. 


\end{abstract}

\section{Introduction}
Data-driven reachability analysis enables safety evaluation of dynamical systems when explicit models are unavailable or too complex for model-based approaches. In contrast to model-based reachability, which ensures absolute coverage of all trajectories, data-driven methods provide probabilistic guarantees on the portion of the true reachable set captured.
Existing approaches broadly fall into two categories: (i) those leveraging system dynamics for tighter bounds \cite{Griffioen2023, alanwar23-TAC, Tebjou2023}, and (ii) those imparting black-box system assumptions for computationally efficient, but looser bounds \cite{lindemann2025formalverificationcontrolconformal, pmlr-v242-dietrich24a, l4dc-comparison}.

When system properties are exploitable, Gaussian processes \cite{Griffioen2023, Akametalu2014}, conformal prediction \cite{Hashemi2023, Tebjou2023}, and algorithms deriving over-approximate reachable sets using zonotopes \cite{alanwar23-TAC} and ellipsoids \cite{Park2024} have been proposed. However, these approximations are often not generalizable or require an amount of data that scales exponentially with state dimension. In contrast, data-driven approximations can be framed directly from trajectory data, resulting in a random variable whose support is the reachable set---capturing a large portion of the true reachable set corresponds to high-probability events. 
The strongest guarantee achievable in this setting is a Probably Approximately Correct (PAC) bound \cite{pacbound}.

PAC bounds yield two layers of probability---a guarantee on the accuracy of the prediction and a guarantee on the confidence in sample generation. They are ubiquitous in statistical learning theory, as seen through several well-established methods including conformal prediction \cite{lindemann2025formalverificationcontrolconformal, angelopoulos2022gentleintroductionconformalprediction, Shafer2008}, scenario optimization \cite{dembo1991scenario, nonconvex24, campi2018introduction, scenario-sample-discard}, and the holdout method \cite{tempobook, langford2005tutorial, hastie2009elements}. Given the efficiency of PAC bounds, these probabilistic approaches have attracted attention in control theory due to their ability to provide comparable finite-sample guarantees. Nevertheless, current formulations are often developed under distinct problem settings, hindering systematic comparison across methods. 

Recent research has explored formal connections between conformal prediction, scenario optimization, and the holdout method. However, the relationship between their PAC guarantees remains largely unexplored. The similarity between the training-conditional guarantees of conformal prediction and those of the holdout method \cite{langford2005tutorial} was first noted in \cite{pmlr-v25-vovk12}.
Subsequently, \cite{l4dc-comparison} introduced a relationship between a form of conformal prediction that relies on a posteriori observations and a specific instantiation of scenario optimization with sampling and discarding. This relationship was further explored in \cite{osullivan2025bridgingconformalpredictionscenario}, which bounded the expected probabilities of constraint violation, and in \cite{Coppola2024}, which empirically compared conformal prediction and scenario optimization for verification of temporal logic specifications.

In this paper, we explore the relationship between conformal prediction, the holdout method, and scenario optimization with sample discarding for data-driven reachability analysis. Building on the work of \cite{l4dc-comparison} and \cite{osullivan2025bridgingconformalpredictionscenario}, we establish new connections between their PAC guarantees and provide insight into their roles in data-driven verification.
Our main contributions include:
\begin{itemize}
    \item We establish that empirical conformal coverage yields a PAC guarantee equivalent to the holdout method.
    \item We show that split conformal prediction and scenario optimization with sample discarding admit structurally parallel PAC guarantees with identical parameterization.
    \item We provide an empirical case study on reachable sets, highlighting when each method is most effective.
\end{itemize}

The remainder of this paper is structured as follows. In Sec.~\ref{sec:preliminaries}, we introduce the methods utilized in our derivations. In Sec.~\ref{sec:CPvsHO} and Sec.~\ref{sec:CPvsSO}, we derive formal connections between PAC bounds. Finally, in Sec.~\ref{sec:empiricaleval} we evaluate these equivalences through a numerical example.

\section{Preliminaries} \label{sec:preliminaries}

\begin{table*}[t]
\centering
\caption{Features of Probabilistic Methods}
\label{tab:features}

\begin{threeparttable}
\begin{tabular}{lccc}
\toprule
Feature &
\textbf{\shortstack{Scenario Optimization\\with Sample Discarding}}&
\textbf{\shortstack{Split Conformal Prediction\\with Training-Conditional Coverage}} &
\textbf{The Holdout Method}\\
\midrule
Convexity & \cmark & \xmark & \xmark \\
i.i.d. Data & \cmark & \cmark & \cmark \\
User-defined Parameters$^\dagger$ & $N, \epsilon, \beta$ & $K, \alpha, \beta$ & $M, \beta$ \\
Calculated Values$^\dagger$ & $k$ & $\hat{q}, C$ & $\epsilon$ \\
Analysis Type & \textit{ex post} & \textit{ex ante} & \textit{ex post} \\
Learns Predictor & \cmark & \xmark & \xmark \\
\bottomrule
\end{tabular}

\begin{tablenotes}
\footnotesize
\item[$\dagger$] 
$N=$ number of scenarios,
$\epsilon=$ accuracy,
$\beta=$ confidence,
$K=$ calibration set size,
$\alpha=$ user-defined error rate,
$M=$ test set size,
$k =$ number of scenarios to remove,
$\hat{q}=$ quantile of calibration scores,
$C=$ prediction set.
\end{tablenotes}
\end{threeparttable}
\end{table*}

We denote probabilities by $P$. Due to the independence of the samples $\delta^{(i)}$, $P^\square$ denotes the product probability over $\square$ samples. We denote reachable sets resulting from conformal prediction and scenario optimization with $CP$ and $SO$ subscripts, respectively. 

\subsection{Data-Driven Reachability Analysis}
A forward reachable set is defined as $\mathcal{R}_t = \{\Phi(t;t_0, x_0, d) : x_0 \in \mathcal{X}_0, d \in \mathcal{D}\}$ where $\mathcal{X}_0 \subseteq \mathbb{R}^{n_x}$ is the set of initial states, $\mathcal{D}$ is the set of disturbance signals $d: [t_0, t] \rightarrow \mathbb{R}^{n_d}$, and $\Phi(t;t_0, \cdot, \cdot) : \mathcal{X}_0 \times \mathcal{D} \rightarrow \mathbb{R}^{n_x}$ is the state transition function. Intuitively, the forward reachable set is the set of states to which the system can transition to at time $t$ from $\mathcal{X}_0$  at $t_0$, subject to any disturbance in $\mathcal{D}$.  
We aim to compute an approximation $\hat{R}$ that is close to the true reachable set in a probabilistic sense. To describe such an approximation, we first endow both $\mathcal{X}_0$ and $\mathcal{D}$ with probability distributions $\mu_{\mathcal{X}_0}$ and $\mu_\mathcal{D}$, respectively. 

Let $\Delta$ be the resulting probability space from which we draw samples $\delta^{(i)} =\Phi(t_1;t_0, x_{0i}, d_i), i = 1, \dotsc, N $ where $x_{01}, \dotsc, x_{0N} \overset{i.i.d}{\sim} \mu_{\mathcal{X}_0}$, $d_1, \dotsc, d_N \overset{i.i.d}{\sim} \mu_{\mathcal{D}}$. We compute a reachable set estimate in the form of a sublevel set of a parameterized function 
\begin{equation}
\label{eq:reachability}
\begin{aligned}
\hat{\mathcal{R}}(\theta) = \{\delta \in \mathbb{R}^{n_x} : g(\delta, \theta) \leq 0\},
\end{aligned}
\end{equation} 
where $g : \mathbb{R}^{n_x} \times \mathbb{R}^{n_{\theta}} \rightarrow \mathbb{R}$, and $\theta$ represents a parameterization of the class of admissible reachable set estimators.

Given samples $\delta^{(1)}, \dots, \delta^{(N)}$, accuracy $\varepsilon$, and confidence $\beta$, we aim to find the minimum-volume reachable set that contains the samples and satisfies the probabilistic guarantee 
\begin{equation}
\label{eq:genpacbound}
    P^N ( V(\hat{R}(\theta)) >\varepsilon) \leq \beta.
\end{equation} 
The violation probability $V(\hat{\mathcal{R}}(\theta))$ is the probability that an unseen scenario will violate the reachable set:
     \begin{equation}
       V(\hat{\mathcal{R}}(\theta)) \equiv  P\{ g(\delta, \theta) > 0\} \equiv P\{ \delta \not\in \hat{\mathcal{R}}(\theta) \}. 
        \label{eq:violationprob}
    \end{equation}

\subsection{Scenario Optimization}
Scenario optimization addresses chance-constrained optimization problems by solving a non-probabilistic relaxation of the original problem on a finite number of samples \cite{dembo1991scenario}. 
To calculate reachable set estimates of the form \eqref{eq:reachability}, we fix a functional ${\rm Vol}: \mathbb{R}^{n_{\theta}} \rightarrow \mathbb{R}_{\geq 0}$ as a proxy for the volume of $\hat{\mathcal{R}}(\theta)$. This results in the following scenario program:
\begin{equation}
\label{eq:volprox}
\begin{aligned}
& \underset{\theta}{\text{minimize}}
& & \rm{Vol}(\theta) \\
& \text{subject to}
& & g(\delta^{(i)}, \theta) \leq 0, \quad \forall i = 1, \dotsc, N\\
& & & \theta \in \mathbb{R}^{n_{\theta}}.
\end{aligned}
\end{equation} 
There exist extensive scenario optimization frameworks for solving this program, encompassing diverse set representations and a broad range of methods for assessing probabilistic bounds \cite{pmlr-v120-devonport20a, devonport-2021-cdc, pmlr-v242-dietrich24a}.
We present scenario optimization with sample discarding in detail, since this algorithmic approach forms the basis of the result of Section \ref{sec:cptosowcd}.

\subsubsection{Scenario Optimization with Sample Discarding}
\label{sec:sowsd}
The sampling-and-discarding approach of scenario optimization \cite{scenario-sample-discard} examines convex-constrained optimization problems and allows for violation of $k$ constraints to improve the optimization solution. Given a scenario program~\eqref{eq:volprox}, the removal procedure is defined as follows.
\begin{definition}
[Scenario Discarding Algorithm]
\label{def:discardscenarios}
    An algorithm $\mathcal{A}$ for constraint removal 
    selects and removes $k < N$ constraints in which all $k$ constraints improve the solution of the scenario optimization program (i.e., active constraints). 
\end{definition}
\vspace{1mm}
Given this removal algorithm $\mathcal{A}$ the following theoretical guarantee holds for the violation probability. 

\begin{theorem} [Theorem 2.1, \cite{scenario-sample-discard}]
\label{thm:samplesdiscarded}
Let $\beta \in (0,1)$ be any confidence parameter value. If $N$ and $k$ are such that 
\begin{equation}
\label{eq:sowd}
    \binom{k+n_\theta-1}{k}\sum^{k+n_\theta-1}_{i=0}\binom{N}{i}\epsilon^i(1-\epsilon)^{N-i} \leq \beta
\end{equation}
where $n_\theta$ is the number of optimization variables, then Eq.~\eqref{eq:genpacbound} holds with $\varepsilon = \epsilon$.
\vspace{1mm}
\end{theorem}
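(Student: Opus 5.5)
The plan follows the classical Campi--Garatti argument for sampling-and-discarding, built on the Helly-type support constraint property of convex scenario programs. Throughout I assume what the original setting requires: $g(\cdot,\theta)$ is convex in $\theta$ for every $\delta$, $\mathrm{Vol}$ is convex, every reduced program (with any subset of constraints) has a unique solution, and nondegeneracy holds almost surely. With $N$ samples and $k$ of them discarded by $\mathcal{A}$, write $\theta^*_k$ for the solution of \eqref{eq:volprox} restricted to the $N-k$ retained constraints. The target is to show that $P^N\{V(\theta^*_k)>\epsilon\}$ is bounded by the left-hand side of \eqref{eq:sowd}.

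\emph{Step 1 (combinatorial covering).} The solution $\theta^*_k$ coincides with the solution of a program built on a small index set. By Helly/Calafiore--Campi, any convex program in $n_\theta$ variables has at most $n_\theta$ support constraints. Consider the set $I$ consisting of the $k$ discarded indices together with the at most $n_\theta$ support constraints of the reduced problem. The solution computed from the constraints in the support set alone equals $\theta^*_k$. Moreover, that solution violates exactly the $k$ discarded constraints and satisfies the other $N-k$.

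So the bad event $\{V(\theta^*_k)>\epsilon\}$ is contained in a union over choices of $I$. For each $I$, the event is that the program solved on $I$'s support part has violation greater than $\epsilon$ and is satisfied by all samples outside $I$. A sharper counting, which I expect to be the main obstacle, groups the sets $I$ so that the number of distinct events to union-bound is $\binom{k+n_\theta-1}{k}$ rather than $\binom{N}{k}\binom{N-k}{n_\theta}$. Campi--Garatti achieve this by a symmetry/exchangeability argument: they fix the positions of the $k+n_\theta$ relevant samples among the first $k+n_\theta$ indices and account for the orderings.

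\emph{Step 2 (conditional probability bound).} Fix one such configuration and condition on the $k+n_\theta$ samples defining the candidate solution $\bar\theta$, which has $V(\bar\theta)>\epsilon$. The remaining samples are independent of $\bar\theta$. Requiring all but at most $k+n_\theta-1$ of the $N$ samples to be satisfied gives a probability of at most the binomial tail $\sum_{i=0}^{k+n_\theta-1}\binom{N}{i}\epsilon^i(1-\epsilon)^{N-i}$. Campi--Garatti derive this tail more carefully by integrating against the distribution of the violation of the $(k+n_\theta)$-sample solution. That distribution is dominated using the fact that for fully supported problems the violation of the solution to a $d$-constraint scenario program is Beta-distributed.

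\emph{Step 3 (combine).} Multiply the number of configurations, $\binom{k+n_\theta-1}{k}$, by the binomial tail from Step 2. This shows the bad probability is at most the left-hand side of \eqref{eq:sowd}, which is $\le\beta$ by hypothesis, so \eqref{eq:genpacbound} holds with $\varepsilon=\epsilon$.

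The delicate points are two. The first is getting exactly the factor $\binom{k+n_\theta-1}{k}$, which requires Campi--Garatti's integral representation via the fully-supported Beta distribution rather than a crude union bound. The second is handling degenerate or non-fully-supported cases, which I would treat by their heating/perturbation argument reducing them to the fully supported case. Since the statement is quoted as Theorem 2.1 of \cite{scenario-sample-discard}, I would ultimately defer these technical details to that reference.
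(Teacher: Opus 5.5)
The paper gives no proof of this statement. It is quoted from Campi and Garatti, so your proposal has to stand on its own. You have the right ingredients: at most $n_\theta$ support constraints, a union over index sets, conditioning on the support samples, the exact violation law in the fully supported case, and heating for degenerate cases. However, the step that actually produces the bound is misdescribed, and it is exactly the step you defer.

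\textbf{The counting step.} No grouping shrinks the union to $\binom{k+n_\theta-1}{k}$ events each bounded by the binomial tail. The union runs over all $\binom{N}{n_\theta}\binom{N-n_\theta}{k}$ pairs $(I,D)$ of support set and discarded set. If you first fix the relevant $(k+n_\theta)$-set by exchangeability, the count is $\binom{N}{k+n_\theta}\binom{k+n_\theta}{k}$, which is the same number. Each individual event has probability of order $1/\binom{N}{k+n_\theta}$ times the tail, not the tail itself.

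\textbf{Step 2.} As written it is inconsistent. The candidate $\bar\theta$ is determined by the $n_\theta$ support samples alone. If you condition on $k+n_\theta$ samples, the remaining event is that the other $N-k-n_\theta$ samples are all satisfied. That gives $(1-v)^{N-k-n_\theta}$, not a binomial tail over all $N$ samples. The law to integrate against is that of the violation of the $n_\theta$-constraint solution, which is $P(V\le v)=v^{n_\theta}$ when fully supported. It is not the law of a ``$(k+n_\theta)$-sample solution.''

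\textbf{The missing computation.} Write $\theta_I$ for the solution using only the constraints indexed by $I$. Let $E_{I,D}$ be the event that $V(\hat{\mathcal{R}}(\theta_I))>\epsilon$, that $\theta_I$ violates $\delta^{(j)}$ for every $j\in D$, and that it satisfies $\delta^{(j)}$ for every $j\notin I\cup D$. Full support, uniqueness, and the fact that the discarded constraints are violated (Def.~\ref{def:discardscenarios}) give $\{V>\epsilon\}\subseteq\bigcup_{I,D}E_{I,D}$, where $V=V(\hat{\mathcal{R}}(\theta^*_k))$. Conditioning on $\{\delta^{(i)}\}_{i\in I}$ and integrating against $d(v^{n_\theta})$ yields
\begin{align*}
P^N\{V>\epsilon\} &\le \binom{N}{n_\theta}\binom{N-n_\theta}{k}\, n_\theta\int_\epsilon^1 v^{k+n_\theta-1}(1-v)^{N-k-n_\theta}\,dv\\
&= \binom{k+n_\theta-1}{k}\sum_{i=0}^{k+n_\theta-1}\binom{N}{i}\epsilon^i(1-\epsilon)^{N-i}.
\end{align*}
The equality uses two facts. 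First, $\binom{N}{n_\theta}\binom{N-n_\theta}{k}\,n_\theta=\binom{k+n_\theta-1}{k}/\mathbf{B}(k+n_\theta,N-k-n_\theta+1)$. Second, the upper tail of $\text{Beta}(k+n_\theta,N-k-n_\theta+1)$ at $\epsilon$ equals that binomial sum.

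So the factor $\binom{k+n_\theta-1}{k}$ and the tail emerge together from a single Beta integral, in which the $N$-dependent count cancels. They are not a configuration count multiplied by a per-configuration bound. Your heating remark for the non-fully-supported case is the right way to finish once this computation is in place.
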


Theorem \ref{thm:samplesdiscarded} establishes a relationship among $N, k, \epsilon, $ and $\beta$, permitting computation of the largest number $k$ of constraints that may be discarded. It is worth noting that $k$ can be obtained by numerically solving Eq.~\eqref{eq:sowd}. However, since a closed-form solution does not exist, we utilize the following explicit formula for an upper-bound on $k$ (Eq. (8) \cite{scenario-sample-discard}):
\begin{equation}
\label{eq:kremoval}
    k \leq \epsilon N - n_\theta + 1 - \sqrt{2\epsilon N \ln \frac{(\epsilon N)^{n_\theta-1}}{\beta}}
\end{equation}
where $\epsilon, N, n_\theta,$ and $\beta$ are all as defined above. 

\subsection{Split Conformal Prediction}
Conformal prediction methods estimate statistical compliance. In this paper, conformal prediction refers to split conformal prediction, which uses a finite number of scenarios to construct coverage guarantees for an algorithm's predictions 
\cite{lindemann2025formalverificationcontrolconformal, pmlr-v25-vovk12, angelopoulos2022gentleintroductionconformalprediction}. 
Given a user-defined probability, conformal prediction performs a calibration step in which nonconformity scores are calculated for an additional dataset to gauge the effectiveness of a given model. This process results in an empirical quantile and the construction of a prediction set, which contains an unseen scenario with the user-defined probability. 
We formally define the conformal procedure for a reachable set estimate $\hat{\mathcal{R}}(\theta)$ as follows:
\begin{enumerate}
    \item Consider $\hat{\mathcal{R}}(\theta)$ and i.i.d. $\delta^{(1)}, \dots \delta^{(K)}$. 
    \item Choose the score function $c=g(\delta, \theta)$, the user-defined error $\alpha$, and the size of the calibration set $K$.
    \item Compute $\hat{q}$, the threshold of the $1- \alpha$ quantile of calibration scores,
    \begin{equation}
    \label{eq:quant}
        \hat{q} = \frac{\lceil  (K+1)(1-\alpha)\rceil}{K}.
    \end{equation}
    \item 
    Define the $1-\alpha$ quantile to be:
    \begin{align}
        &\text{Quantile}_{1-\alpha} = \\
        &\inf\{c \in \mathbb{R} : P(g(\delta^{(i)}, \theta) \leq c) \geq 1 - \alpha, \forall i = 1, \dots, K\}.\notag
    \end{align}
    \item Form the prediction set,
    which is equivalent to updating the reachable set to ensure marginal coverage,
    \begin{equation}
        \hat{\mathcal{R}}_{CP}(\theta) = \{ \delta \in \mathbb{R}^{n_x}: g(\delta, \theta) \leq \text{Quantile}_{1-\alpha}\}.
    \end{equation}
\end{enumerate}
This conformal procedure results in prediction sets with the following coverage guarantee. 
\begin{theorem}[Theorem 1, \cite{angelopoulos2022gentleintroductionconformalprediction}]
\label{thm:maincoverage}
    Suppose $\delta^{(1)}, ... \delta^{(K)}$ and $\delta^{(K+1)}$ are i.i.d., and the conformal procedure was performed as outlined above, then the following holds
    \begin{equation}
        P(\delta^{(K+1)} \in \hat{\mathcal{R}}_{CP}(\theta)) \geq 1 - \alpha.
    \end{equation}
\end{theorem}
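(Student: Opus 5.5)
The plan is the standard exchangeability argument for split conformal prediction. Since $\hat{\mathcal{R}}(\theta)$, and hence $\theta$, is fixed independently of the calibration and test data, the scores $s_i := g(\delta^{(i)},\theta)$ for $i=1,\dots,K+1$ are i.i.d. real random variables, and in particular exchangeable. I will read $\text{Quantile}_{1-\alpha}$ as the empirical quantile of the calibration scores at level $\hat{q}$ from \eqref{eq:quant}. Concretely, it is the order statistic $s_{(\lceil (K+1)(1-\alpha)\rceil)}$ of $s_1,\dots,s_K$, set to $+\infty$ if $\lceil (K+1)(1-\alpha)\rceil > K$. With this reading, the event $\delta^{(K+1)} \in \hat{\mathcal{R}}_{CP}(\theta)$ is exactly the event $s_{K+1} \le s_{(m)}$, where $m := \lceil (K+1)(1-\alpha)\rceil$.

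The key steps are as follows.

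\textbf{Step 1: reduce to ranks.} For the sorted calibration scores $s_{(1)}\le\dots\le s_{(K)}$, I will use the inclusion
\[
\{s_{K+1} \le s_{(m)}\} \supseteq \{\text{rank of } s_{K+1} \text{ among } s_1,\dots,s_{K+1} \text{ is at most } m\}.
\]
Here ties are broken toward the test point.

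\textbf{Step 2: use exchangeability.} Assume first that there are no ties, which holds almost surely for continuous scores. Then the rank of $s_{K+1}$ in the full set of $K+1$ scores is uniform on $\{1,\dots,K+1\}$. Hence
\[
P(s_{K+1}\le s_{(m)}) = \frac{m}{K+1} \ge 1-\alpha.
\]
This gives the claim; when $m>K$ the bound is trivial.

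\textbf{Step 3: handle ties.} With ties the inequality still holds. I will argue this either by random tie-breaking, which only shrinks the event on the right of the inclusion in Step 1, or by noting that $P(s_{K+1} \le s_{(m)}) \ge P(\text{strict rank} \le m)$.

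The main obstacle is not the probability argument but the fact that the paper's definition of $\text{Quantile}_{1-\alpha}$ is stated loosely, as an infimum involving the population probability. I would first make precise that it means the $\hat{q}$ empirical quantile of the $K$ calibration scores, i.e., the $m$-th order statistic. After that, the bound is marginal over both the calibration data and the test point, so I must be careful that the probability in the statement is the joint probability $P^{K+1}$ and not a conditional one.
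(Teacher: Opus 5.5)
Your proposal is correct and matches the standard exchangeability argument. The paper gives no proof of its own; it cites Theorem 1 of Angelopoulos and Bates, whose proof is exactly this: coverage reduces to $P(s_{K+1} \le s_{(m)}) = \lceil (K+1)(1-\alpha)\rceil/(K+1) \ge 1-\alpha$ because the test score's rank is uniform, with the threshold set to $+\infty$ when $\lceil (K+1)(1-\alpha)\rceil > K$. Reading $\text{Quantile}_{1-\alpha}$ as the $m$-th calibration order statistic is the intended interpretation, and your random tie-breaking route in Step~3 already handles ties; the second alternative you list there still relies on it.
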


\subsubsection{Training-conditional Coverage} \label{sec:trainingcondcoverage}
The coverage of a conformal predictor conditionally on a calibration set of size $K$ is a random quantity \cite{angelopoulos2022gentleintroductionconformalprediction}. In other words, if a conformal prediction algorithm is run more than once with a freshly sampled calibration set, the resulting coverage will vary. We know from \cite{angelopoulos2022gentleintroductionconformalprediction,angelopoulos2025theoreticalfoundationsconformalprediction, pmlr-v25-vovk12} an algorithm will achieve $1-\alpha$ coverage on average. Further, the fluctuations between runs follow a Beta distribution $Beta$ with analytic form 
\begin{align}
\label{eq:calibrationcoverage}
    P (\delta^{(K+1)} \in \hat{\mathcal{R}}_{CP}(\theta) | \{\delta\}^K_{i=1})\sim  Beta&(K+1-l, l),\\
    & l = \lfloor (K+1) \alpha \rfloor. \notag
\end{align}
The Beta distribution provides a lower bound on the training-conditional coverage. That is, for i.i.d. $\{\delta\}^K_{i=1}$ and $\beta \in (0,1)$, 
\begin{equation}\notag
    P^K(P(\delta^{(K+1)}\notin \hat{\mathcal{R}}_{CP}(\theta) | \{\delta\}^K_{i=1}) \leq \alpha) \geq 1 - \beta
\end{equation}
\cite[Eq. 4.1]{angelopoulos2025theoreticalfoundationsconformalprediction}, which is equal to the violation probability of the reachable set:
\begin{equation}
\label{eq:conformalbound}
    P^K(V(\hat{\mathcal{R}}_{CP}(\theta) | \{\delta\}^K_{i=1}) \leq \alpha ) \geq 1 - \beta.
\end{equation}
The tail probability of the coverage conditionally on the calibration data is controlled by $\beta$ \cite{angelopoulos2022gentleintroductionconformalprediction}. Therefore, for any $\beta$, the calibration set cardinality can be explicitly computed using a beta inverse cumulative distribution function (CDF).
\begin{definition}[Beta Inverse CDF] 
\label{def:betacdf}
For beta distribution $Beta(a,b)$ with parameters $a, b$,  all $\beta \in (0, 1]$, and beta function $\mathbf{B}$:
\begin{align}
    \overline{\text{Beta}}(a, b, \beta) &= \underset{e}{\max}\bigl\{ e : \text{Beta}_{cdf}(a, b, e) \leq \beta \bigr\}, \text{where} \notag\\
    \text{Beta}_{cdf}(a, b, e) &= \frac{1}{\mathbf{B}(a,b)}\int_{0}^{e}t^{a-1}(1-t)^{b-1}dt.\\ \notag
\end{align}
\end{definition}
Intuitively, the beta inverse CDF returns a bound on a given value $k$ being exceeded, given probability $\beta$.

\subsection{The Holdout Method}
The holdout method utilizes a test set of $M$ fresh samples (i.e., distinct from training set) to provide a risk estimate for a given model \cite{tempobook, hardtrecht2022patterns, dietrich2025datadrivenreachabilityscenariooptimization}. We draw a \textit{new} set of samples $\delta^{(i)}_s =\Phi(t_1;t_0, x_{0i}^s, d_i^s), i = 1, \dotsc, M $ where $x_{01}^s, \dotsc, x_{0M}^s \overset{i.i.d}{\sim} \mu_{X_0}$, $d_1^s, \dotsc, d_M^s \overset{i.i.d}{\sim} \mu_{D}$, and test the accuracy of $\hat{\mathcal{R}}(\theta)$ on $\delta_s^{(1)}, \dotsc, \delta_s^{(M)}$. We would like to calculate the probability of observing at most $k$ reachable set violations out of $M$ samples. From a statistical perspective, this is a tail bound for a binomial distribution. Therefore, we use a binomial tail inversion, as defined below, to calculate a bound on the true error of the reachable set.
\begin{definition}[Binomial Tail Inversion]
\label{def:binomial_tail_inversion}
For $k$ violations out of $M$ scenarios and all $\beta \in (0, 1]$:
    \begin{equation}
        \overline{\text{Bin}}(k, M, \beta) = \max_{e}\Bigl\{ e : \text{Bin}_{cdf}\Bigl(k, M, e \Bigr) \geq \beta \Bigr\}, \,\,\text{where}
        \notag
    \end{equation}
    \begin{equation}
        \text{Bin}_{cdf}\Bigl(k, M, e \Bigr) =  \sum_{j=0}^{k} \binom Mj e^j (1-e)^{M-j}.
        \label{eq:binominversion}
    \end{equation}
    \label{def1}
\end{definition}
Following \cite{langford2005tutorial}, this is formulated as the following theorem:
\begin{theorem} [Theorem 1, \cite{dietrich2025datadrivenreachabilityscenariooptimization}]
\label{thm:holdout}
Given any $\beta \in (0, 1)$, empirical count of boundary violations $k$, and a test set of $M$ i.i.d. scenarios, the following probability bound holds for the reachable set estimate $\hat{\mathcal{R}}(\theta)$: 
\begin{equation}
\begin{aligned}
\label{equ:probviolate}
 {P^M}( V(\hat{\mathcal{R}}(\theta)) > \overline{\text{Bin}}(k, M, \beta) ) \leq \beta.
\end{aligned}
\end{equation} 
\end{theorem}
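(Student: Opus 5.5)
The plan is to reduce the statement to a tail bound for a binomial random variable and then invert that bound monotonically. First, $\hat{\mathcal{R}}(\theta)$ must be fixed independently of the test set; it is computed from separate training data. So we may condition on it and treat $p := V(\hat{\mathcal{R}}(\theta))$ as a deterministic number in $[0,1]$. Define the indicators $Z_i = \mathds{1}\{\delta_s^{(i)} \notin \hat{\mathcal{R}}(\theta)\}$ for $i=1,\dots,M$. Because the test scenarios are i.i.d., the $Z_i$ are i.i.d. Bernoulli$(p)$ by \eqref{eq:violationprob}. Hence the empirical violation count $\hat k = \sum_i Z_i$ is distributed as Binomial$(M,p)$, and $P^M(\hat k \le j) = \text{Bin}_{cdf}(j, M, p)$ as in Definition~\ref{def:binomial_tail_inversion}.

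Next, I will record two monotonicity facts. For fixed $M$ and $e$, the map $j \mapsto \text{Bin}_{cdf}(j,M,e)$ is nondecreasing. For fixed $j<M$, the map $e \mapsto \text{Bin}_{cdf}(j,M,e)$ is continuous and nonincreasing on $[0,1]$; this follows from the standard identity $\frac{d}{de}\text{Bin}_{cdf}(j,M,e) = -M\binom{M-1}{j}e^j(1-e)^{M-1-j}$, or from a coupling argument. Together these imply that $\overline{\text{Bin}}(k,M,\beta)$ is well defined as a maximum. They also give the key equivalence
\[
e > \overline{\text{Bin}}(k,M,\beta) \iff \text{Bin}_{cdf}(k,M,e) < \beta .
\]
The case $k=M$ gives $\overline{\text{Bin}}=1$, and then the event in the statement is empty.

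The core step handles the random $k = \hat k$. The event $\{p > \overline{\text{Bin}}(\hat k, M, \beta)\}$ equals $\{\text{Bin}_{cdf}(\hat k, M, p) < \beta\}$. Set $k^* = \max\{j : \text{Bin}_{cdf}(j,M,p) < \beta\}$; if this set is empty, the probability is zero and we are done. Since $\text{Bin}_{cdf}(\cdot,M,p)$ is nondecreasing, the event coincides with $\{\hat k \le k^*\}$. Its probability is therefore $\text{Bin}_{cdf}(k^*,M,p) < \beta$. This is the usual argument that the CDF evaluated at its own random variable is stochastically larger than uniform. Finally, integrate over the training data (or note the bound holds for every fixed $\hat{\mathcal{R}}(\theta)$) to obtain \eqref{equ:probviolate} unconditionally.

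The main obstacle I expect is justifying the "iff" inversion cleanly. This requires care at the boundary between $\ge$ and $<$, with the continuity of $\text{Bin}_{cdf}$ in $e$ ensuring the maximum is attained. It also requires handling the degenerate cases $p\in\{0,1\}$ and $k=M$. I would use the continuity argument first: the set $\{e:\text{Bin}_{cdf}(k,M,e)\ge\beta\}$ is a closed interval $[0,\bar e]$. That fact yields the equivalence directly.
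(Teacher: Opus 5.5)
Your proof is correct and complete. It is the standard Langford test-set argument: the holdout violation count is Binomial$(M,V(\hat{\mathcal{R}}(\theta)))$ for a fixed set, and the tail inversion is then inverted monotonically at the random count. The paper gives no proof of its own here; it imports the result by citing \cite{dietrich2025datadrivenreachabilityscenariooptimization} and \cite{langford2005tutorial}, whose argument is this one. Your derivative identity in $e$ is the differentiated form of the binomial--incomplete-beta identity $\text{Bin}_{cdf}(k,M,e)=1-\text{Beta}_{cdf}(k+1,M-k,e)$ that the paper uses in the proof of Theorem~\ref{thm:conformaltoholdout}, so your continuity and monotonicity step can equally be read off from there.
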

\vspace{1mm}
The probability above is taken with respect to the $M$ holdout samples and a fixed reachable set $\hat{\mathcal{R}}(\theta)$ estimate. 

\subsection{Summary of Methods}
\label{sec:summethods}
Scenario optimization with sample discarding and conformal prediction are flexible to user-defined levels of accuracy, while the holdout method relies solely on a posteriori observations. There exist scenario optimization methods (e.g., nonconvex \cite{nonconvex24}, post-design verification \cite{campicdc25}, constraint relaxation \cite{garatticdc23}) that rely on a posteriori evaluation. However, given the a priori nature of setting conformal parameters, it is unclear how to construct an equivalence between conformal prediction and these forms of scenario optimization.

To clarify the differences, we summarize the assumptions, user-defined parameters, and features of the considered methods for data-driven reachability analysis in Table \ref{tab:features}. Specifically, scenario optimization with sample discarding 
is the only method capable of simultaneously providing a PAC guarantee and a learned predictor. In contrast, conformal prediction with training-conditional coverage and the holdout method \textit{only} offer frameworks for uncertainty quantification.
Since this paper focuses on PAC guarantees, all three methods rely on i.i.d. data. Additionally, we only investigate convex set parameterizations since scenario optimization with sample discarding depends on this assumption.

\textbf{Problem Statement:} Let us assume there exists a set estimate $\hat{\mathcal{R}}(\theta)$ where $g$ is convex in $\theta$, a desired confidence $\beta$, and a process to obtain a finite number of i.i.d. samples $\delta^{(i)}$. Additionally, let us consider methods which characterize a probabilistic guarantee for this set in the structural form of Eq. \eqref{eq:genpacbound},
where $\varepsilon$ is dependent on the verification approach. 
The aim of this paper is to identify conditions under which identical, or structurally equivalent, PAC bounds (Eq. \eqref{eq:genpacbound}) arise from three distinct data-driven verification approaches. An overview of the derived equivalence is provided in Fig.~\ref{fig:pacequivalence}, with details given in Sec.~\ref{sec:CPvsHO} and Sec.~\ref{sec:CPvsSO}.

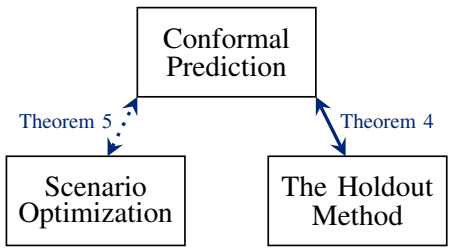
\begin{figure}[tb]
    \centering
    {
    \resizebox{.75\linewidth}{!}{%
       \usetikzlibrary{arrows.meta}

\begin{tikzpicture}
\definecolor{CaliforniaGold}{rgb}{0.992, 0.71, 0.082}
\definecolor{BerkeleyBlue}{rgb}{0.0, 0.149, 0.463}
\definecolor{darkblue}{rgb}{0.0, 0.13, 0.28}
\definecolor{rose}{rgb}{1.0, 0.0, 0.5}

\draw[] (-0.6,0) rectangle (.9,0.75);
\draw[] (0.5,1.25) rectangle (2, 2);
\draw[] (1.6, 0) rectangle (3.1, 0.75);

\draw[stealth-stealth, thick, dotted, draw=BerkeleyBlue](0.25,0.75) -- (0.5, 1.25);

\draw (0.15, 0.495) node  [font=\scriptsize] [] {Scenario};
\draw (0.15, 0.255) node  [font=\scriptsize] [] {Optimization};


\draw (1.25, 1.745) node  [font=\scriptsize] [] {Conformal};
\draw (1.25, 1.505) node  [font=\scriptsize] [] {Prediction};

\draw[stealth-stealth, thick, draw=BerkeleyBlue](2,1.25) -- (2.25, 0.75);

\draw (2.35, 0.495) node  [font=\scriptsize] [] {The Holdout};
\draw (2.35, 0.255) node  [font=\scriptsize] [] {Method};

\draw (2.6, 1.05) node [font=\tiny, BerkeleyBlue] [] {Theorem 4};
\draw (-0.1, 1.05) node [font=\tiny, BerkeleyBlue] [] {Theorem 5};
\end{tikzpicture}
       }
    }
    \caption{Relationship between the PAC bounds of conformal prediction, the holdout method, and scenario optimization.
    A specific conformal procedure (Lemma \ref{lem:empircalcoverage}) yields PAC bounds equivalent to the holdout method (Thm. \ref{thm:conformaltoholdout}). With suitable choices of score function, error rate, and calibration set, conformal prediction can also produce PAC bounds structurally equivalent to scenario optimization with sample discarding (Prop. \ref{prop:1} and Thm. \ref{thm:thm5}).}
    \label{fig:pacequivalence}
\end{figure}

\section{Conformal Prediction versus The Holdout Method}\label{sec:CPvsHO}
In this section, we establish a relationship between conformal prediction and the holdout method. We describe how to choose an appropriate conformal score function and error rate to achieve empirical conformal coverage of a reachable set estimate and prove this probabilistic guarantee reduces to that provided by the holdout method. 

First, we define a form of conformal prediction that is parameterized a posteriori using empirical observations (adapted from \cite[Thm. 3]{l4dc-comparison}). 
\begin{lemma} [Empirical Conformal Coverage for Reachable Sets]
\label{lem:empircalcoverage}
Given a reachable set estimate $\hat{\mathcal{R}}(\theta) = \{ \delta \in \mathrm{R}^{n_x} : g(\delta, \theta) \leq 0\}$, a calibration set of size $K$, and $k$ reachable set violations, the following conformal PAC guarantee holds:
\begin{equation}
    P(V(\hat{\mathcal{R}}(\theta)) > \overline{\text{Beta}}(k+1, K-k, 1-\beta))\leq \beta.
\end{equation}
\vspace{1mm}
\begin{proof}
    Consider a conformal prediction problem with conformal score function $c = g(\delta, \theta)$, $\alpha = \frac{k+1}{K+1}$ and $\hat{q}$ calculated according to Eq. \eqref{eq:quant}. 
    This results in the prediction set $\hat{\mathcal{R}}(\theta) = \{ \delta \in \mathbb{R}^{n_x} : g(\delta, \theta) \leq \text{Quantile}_{1-\alpha}=0\}$ such that Thm. \ref{thm:maincoverage} yields,
    \begin{align}
        &P(\delta^{(K+1)} \in \hat{\mathcal{R}}(\theta))  \geq 1 - \frac{k+1}{K+1}. \notag
    \end{align}
    The conformal predictor conditionally on the calibration set is a random quantity following the Beta distribution (Sec.~\ref{sec:trainingcondcoverage}). Therefore,  
    \begin{equation} \notag
        P(\delta^{(K+1)} \in \hat{\mathcal{R}}(\theta) | \{\delta\}^K_{i=1}) \sim Beta(K-k, k+1).
    \end{equation}
    To calculate a lower bound on the training-conditional coverage given empirical error $k$, we use an inverse beta CDF (cf. Def.~\ref{def:betacdf}),
    \begin{equation} \notag
        P(V(\hat{\mathcal{R}}(\theta)) > \overline{\text{Beta}}(k+1, K-k, 1-\beta)) \leq \beta.
    \end{equation}
\end{proof}
\end{lemma}

Given Lemma \ref{lem:empircalcoverage} and Thm. \ref{thm:holdout}, we are able to establish a formal connection between the PAC guarantees of conformal prediction and the holdout method. 
\begin{theorem}
\label{thm:conformaltoholdout}
    Empirical conformal coverage produces PAC-bounds equivalent to the holdout method.
    
\begin{proof}
It is a well known result that the partial sum of probabilities for the binomial distribution is an incomplete Beta integral $I$ \cite{Jowett1963, pmlr-v25-vovk12}. Therefore, given $M$ samples, $k$ violations, and error rate $e$, 
\begin{align}
    \text{Bin}_{cdf}(k, M, e) &= I_{1-e}(M-k, k+1) \notag \\
    & = 1 - I_{e}(k+1, M-k) \notag\\
    & = 1 - \text{Beta}_{cdf}(k+1, M-k, e). \notag
\end{align}
To solve for a bound on $e$, we compute a binomial tail inversion, as defined in Eq. \eqref{eq:binominversion}:
\begin{align}
    \overline{\text{Bin}}(k, M, \beta) &= \max_{e}\Bigl\{ e : \text{Bin}_{cdf}(k, M, e ) \geq \beta \Bigr\} \notag \\
    &= \max_{e}\Bigl\{ e : 1- \text{Beta}_{cdf}(k+1, M-k, e ) \geq \beta \Bigr\} \notag \\
    &= \max_{e}\Bigl\{ e : \text{Beta}_{cdf}(k+1, M-k, e ) \leq 1-\beta \Bigr\} \notag \\
    &= \overline{\text{Beta}}(k+1, M-k, 1-\beta). \notag
\end{align}
Thus, Eq. \eqref{equ:probviolate} becomes
\begin{align}
    {P}( V(\hat{R}(\theta)) &> \overline{\text{Bin}}(k, M, \beta) ) \leq \beta \Leftrightarrow \notag \\
    P(V(\hat{\mathcal{R}}(\theta)) &> \overline{\text{Beta}}(k+1, M-k, 1-\beta)) \leq \beta.
\end{align}
\end{proof}
\end{theorem}

\textbf{Remark 1.} The conformal procedure presented in this section deviates from the normal conformal protocol, which adapts predictions to guarantee marginal coverage. Instead, the conformal parameters are set a posteriori to match empirical observations, bounding the error on a trained instance. Therefore, the conformal guarantee is no longer ex ante.

\textbf{Remark 2.} The bounds derived in this section are taken with respect to testing data. However, it is possible to obtain a bound which holds over the probability of all data points (i.e. training and testing), as discussed in \cite{dietrich2025datadrivenreachabilityscenariooptimization}. 


\section{Conformal Prediction versus Scenario Optimization}\label{sec:CPvsSO}
In this section, we establish a relationship between the PAC guarantees of conformal prediction with training-conditional coverage and scenario optimization with sample discarding. Specifically, we show that these methods
render structurally parallel PAC bounds with identical parameterization (i.e., $\varepsilon$ and $\beta$) for equivalent reachable sets.

To begin, we define a scenario discarding algorithm for a reachable set (adapted from \cite{osullivan2025bridgingconformalpredictionscenario}).
\begin{definition}[Scenario Discarding for a Reachable Set]
\label{def:removefromset}
    Consider a reachable set estimate $\hat{\mathcal{R}}(\theta) = \{ \delta \in \mathrm{R}^{n_x} : g(\delta, \theta) \leq 0\}$. Define $c_i=g(\delta^{(i)}, \theta)$ and the one-dimensional program  
    \begin{equation} \notag
        \underset{\overline{c}}{\text{minimize}} \quad \overline{c} \quad\text{s.t.} \quad c_i \leq \overline{c}, \quad \forall i = 1, \dots, N-k.
    \end{equation}
    The $k$ largest $c_i$ are removed, such that the $N-k$ smallest $c_i$ are returned and the maximum $\overline{c^*}$ is the solution.
\end{definition}
\vspace{1mm}
\label{sec:cptosowcd}
Given a reachable set estimate with convex parameterization $\hat{\mathcal{R}}(\theta) = \{ \delta \in \mathrm{R}^{n_x} : g(\delta, \theta) \leq 0\}$, we aim to derive PAC guarantees in the form of Eq. \eqref{eq:genpacbound}. We design the conformal parameters, $K, \alpha, \beta$, to guarantee equivalence with the corresponding scenario parameters $N, \epsilon, \beta$.

\begin{proposition}
\label{prop:1}
    Consider the PAC guarantee in Eq. \eqref{eq:genpacbound} with user-defined confidence $\beta$. For $S$ i.i.d. samples, the conformal and scenario procedures yield an equivalent  $\varepsilon$:
    \begin{align}
        S = \underset{S'}{\min} \quad \overline{\text{Beta}}(S'+1-l, l, 1-\beta) \geq 1 - \varepsilon, \notag \\
        \text{where } l = \lfloor (S'+1) \varepsilon \rfloor, 
        \text{ and} \notag\\
        \varepsilon = \Bigg(1+S \ln\frac{1}{\beta} \Bigg) - \Bigg(\sqrt{\Big(1 + S \ln \frac{1}{\beta}\Big)^2 - 1} \Bigg). \notag
    \end{align}

\begin{proof}
    We aim to jointly select a finite number $S$ of i.i.d. samples and an accuracy level $\varepsilon$ such that the conformal ($K, \alpha, \beta$) and scenario ($N, \epsilon, \beta$) formulations yield identically parameterized PAC bounds.
    Let $S$ denote the number of i.i.d. samples used in both procedures. 
    
    To ensure conformal training-conditional coverage, $S$ is chosen such that the conformal prediction set achieves $1-\varepsilon$ coverage with probability $1-\beta$, as defined in \cite{angelopoulos2022gentleintroductionconformalprediction}:
    \begin{align}\label{eq:calibcard}
        S = \underset{S'}{\min} \quad \overline{\text{Beta}}(S'+1-l, l, 1-\beta) \geq 1 - \varepsilon\\
        \text{where } l = \lfloor (S'+1) \varepsilon \rfloor. \notag 
    \end{align}

    Next, $\varepsilon$ is designed such that both procedures retain the same number of scenarios, $S - k$, where $k$ samples are set to be removed as defined by Def. \ref{def:discardscenarios}. To enforce this, the threshold of the conformal quantile $\hat{q}$ (Eq. \eqref{eq:quant}) is set equal to the normalized number of retained scenarios:
    \begin{equation}
        \frac{\lceil(S+1)(1-\varepsilon)\rceil}{S} = \frac{S-k}{S}.
    \end{equation}
    The ceiling function can be bounded as 
    \begin{align}
    \label{eq:ceilingrelax}
    \frac{(S+1)(1-\varepsilon')}{S} 
    &\leq \frac{\lceil  (S+1)(1-\varepsilon)\rceil}{S} \notag \\
    &< \frac{(S+1)(1-\varepsilon') +1}{S}.
    \end{align}
    Together, Eq. \eqref{eq:kremoval} and Eq. \eqref{eq:quant} yield,
    \begin{align} 
    \frac{(S+1)(1-\varepsilon')}{S} = \frac{S-k}{S}, \notag\\
    (S+1)(1-\varepsilon') = S - \varepsilon' S + \sqrt{2\varepsilon' S \ln \frac{1}{\beta}}, \notag\\
    1 - \varepsilon' = \sqrt{2\varepsilon' S \ln \frac{1}{\beta}}, \notag\\
    (1 - \varepsilon')^2 = 2\varepsilon' S \ln \frac{1}{\beta}, \notag \\
    (\varepsilon')^2 -2\varepsilon' \Big(1 + S \ln (\frac{1}{\beta})\Big) +1 = 0, \notag \\
    \varepsilon' = \Bigg(1+S \ln\frac{1}{\beta} \Bigg) - \Bigg(\sqrt{\Big(1 + S \ln \frac{1}{\beta}\Big)^2 - 1} \Bigg). 
    \label{eq:alpha}
    \end{align}
    From the ceiling relaxation in Eq. \eqref{eq:ceilingrelax}, it follows that $\varepsilon' \leq \varepsilon < \varepsilon' + \frac{1}{K+1}$, and we therefore set $\varepsilon=\varepsilon'$.

    By \textit{simultaneously} solving Eq. \eqref{eq:calibcard} and Eq. \eqref{eq:alpha}, $S$ and $\varepsilon$ are obtained to ensure the conformal and scenario procedures retain an equal number of scenarios. Consequently, the conformal and scenario formulations share the same parameterization, $K=N=S$, $\alpha = \epsilon = \varepsilon$.
\end{proof}

\end{proposition}

Given Prop. \ref{prop:1}, we can utilize the identical parameterization for the conformal and scenario formulations to update the reachable set estimates.

\begin{theorem}
\label{thm:thm5}
    For the parametrization of Prop. \ref{prop:1}, conformal prediction with training-conditional coverage and scenario optimization with sample discarding yield identical reachable set estimates (i.e., $\hat{\mathcal{R}}_{CP}(\theta) = \hat{\mathcal{R}}_{SO}(\theta)$).
    
\begin{proof}
    Consider a convexly parameterized reachable set estimate, $\hat{\mathcal{R}}(\theta)$. Given Eq. \eqref{eq:calibcard} and Eq. \eqref{eq:alpha}, the conformal procedure results in the prediction set:
    \begin{equation}
        \hat{\mathcal{R}}^*_{CP}(\theta) = \{ \delta \in \mathbb{R}^{n_x}: g(\delta, \theta) \leq \text{Quantile}_{1-\alpha}(\hat{\mathcal{R}}(\theta))\},\notag
    \end{equation}
    where $\text{Quantile}_{1-\alpha}(\hat{\mathcal{R}}(\theta))$ was designed to have $S-k$ data points by setting the threshold $\hat{q}$ in Prop. \ref{prop:1}. Similarly, the scenario optimization problem defined in Def. \ref{def:removefromset} produces the reachable set estimate:
    \begin{equation}
        \hat{\mathcal{R}}^*_{SO}(\theta) = \{ \delta \in \mathbb{R}^{n_x}: g(\delta, \theta) \leq \overline{c^*}\} \notag.
    \end{equation}
    Since $\overline{c^*}$ is also computed as the $S-k$ value, $\overline{c^*} = \text{Quantile}_{1-\alpha}(\hat{\mathcal{R}}(\theta))$. Therefore, $\hat{\mathcal{R}}_{CP}(\theta) = \hat{\mathcal{R}}_{SO}(\theta)$.
\end{proof}
\end{theorem}

We constructed conformal and scenario programs yielding identically parameterized PAC guarantees in the form of Eq. \eqref{eq:genpacbound} with equivalent reachable sets, $\hat{\mathcal{R}}^*_{SO} = \hat{\mathcal{R}}^*_{CP}$. 
    Explicitly, given Eq. \eqref{eq:calibrationcoverage} and Thm. \ref{thm:samplesdiscarded}, the following guarantees hold:
    \begin{equation}
        P^K(V(\hat{\mathcal{R}}^*_{CP}(\theta)| \{\delta\}^K_{i=1}) \leq \alpha) \geq 1 - \beta, 
    \end{equation} and
    \begin{equation}
        P^N(V(\hat{\mathcal{R}}^*_{SO}(\theta)) \leq \epsilon) \geq 1 - \beta.
    \end{equation}

\textbf{Remark 3. } While 
we have shown conformal prediction and scenario optimization render identically parameterized, parallel PAC bounds in this section, the interpretation of these guarantees remains different. 
Conformal prediction provides a bound on the error of a new observation falling outside of a particular predictor produced by the algorithm given the calibration set---an ex ante guarantee. 
Scenario optimization, provides a bound on the true error of the learned object given the dataset used---an ex post guarantee. 


\section{Empirical Results}\label{sec:empiricaleval}
The presented equivalence results hold for any set predictions with convex parameterization and can be applied to other forms of system analysis (e.g., invariant sets). For our empirical evaluation, we focus on forward reachable set estimates.
In particular, we compare conformal prediction, scenario optimization with sample discarding, and the holdout method for a control benchmark. 

We consider the Duffing oscillator, a nonlinear oscillator, which exhibits chaotic behavior as described in \cite[Section 4.1]{pmlr-v242-dietrich24a}, and approximate reachable sets through scenario optimization with ellipsoids \cite[Eq. 8]{pmlr-v120-devonport20a}. 
We utilize $N=1500$ samples and $\beta=10^{-9}$ to generate a reachable set estimate, yielding parameterization $\theta = \{A, b\}$, for which we will derive probabilistic guarantees.

\subsection{Empirical Conformal Prediction and Holdout Method}
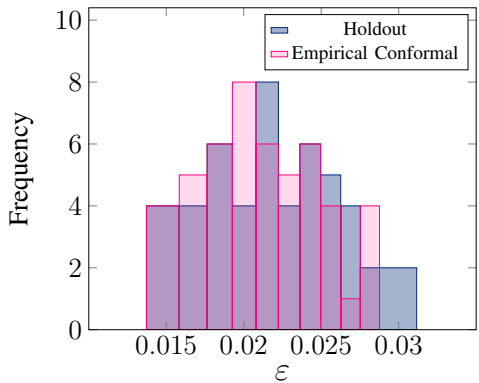
\begin{figure}[tb]
\centering
    {
    \resizebox{0.75\linewidth}{!}{%
       \begin{tikzpicture}
\definecolor{CaliforniaGold}{rgb}{0.992, 0.71, 0.082}
\definecolor{BerkeleyBlue}{rgb}{0.0, 0.149, 0.463}
\definecolor{darkblue}{rgb}{0.0, 0.13, 0.28}
\definecolor{rose}{rgb}{1.0, 0.0, 0.5}

    \begin{axis}[
        width=\linewidth,
        xmin=0.01, xmax=0.035,
        ymin=0, ymax=10.4,
        scaled x ticks=false,
        xticklabel style={
            /pgf/number format/fixed,
            /pgf/number format/precision=5
        },
        xtick={0.015,0.02,0.025,0.03},
        xlabel={\LARGE$\varepsilon$},
        ylabel={\Large Frequency},
        area style,
        tick label style={font=\Large},
        legend style={
            font=\normalsize,
            legend image post style={scale=0.5} 
    }
    ]
    \addplot+[
      ybar interval,
      fill=BerkeleyBlue,
      fill opacity=0.35,
      draw=BerkeleyBlue
    ] plot coordinates {
      (0.013720381231865874, 4)
      (0.015838378454145152, 4)
      (0.017636037119533314, 6)
      (0.01926731335556811, 4)
      (0.020790718591740687, 8)
      (0.022236341991062326, 4)
      (0.023622266518445588, 6)
      (0.024960401754477664, 5)
      (0.02625908485070735, 4)
      (0.027524451673433883, 2)
      (0.02876117997551697, 2)
      (0.031162618588249442, 1)
    };
    \addlegendentry{Holdout}
    
    \addplot+[
      ybar interval,
      fill=rose,
      fill opacity=0.15,
      draw=rose
    ] plot coordinates {
      (0.013720514368789516, 4)
      (0.015838380798063767, 5)
      (0.017636038959362943, 6)
      (0.019267329175221337, 8)
      (0.02079072056712361, 6)
      (0.0222363434408569, 5)
      (0.02362227081690471, 6)
      (0.024960401947139776, 4)
      (0.026259085426393547, 1)
      (0.027524456532750485, 4)
      (0.0287611863880064, 1)
    };
    \addlegendentry{Empirical Conformal}
\end{axis}

\end{tikzpicture}
       }
    }
    \caption{Distribution of $\varepsilon$'s for the holdout method and empirical conformal prediction over 50 experiments.}
    \label{fig:conformalvholdout}
\end{figure}
We use an additional test set of $M=1500$ samples for the holdout method and $K=1500$ samples for empirical conformal prediction. Given $k$ violations for each method, we compute a binomial tail inversion and inverse beta CDF, to determine the accuracies of our reachable set estimate. The binomial tail inversion is implemented as described in~\cite{dietrich2025datadrivenreachabilityscenariooptimization}, and the inverse beta cdf is computed using the SciPy Python function $\texttt{beta.ppf}$. We run these computations $eval=50$ times, resulting in the $\varepsilon$ distributions shown in Fig. \ref{fig:conformalvholdout}, exhibiting substantial overlap. Note that as the number of experiments $eval$ approaches infinity, these distributions become identical. Furthermore, when both methods are applied to the same set of samples, the values of $k$ and $\varepsilon$ are identical up to machine precision.

\subsection{Split Conformal Prediction with Training-Conditional Coverage and Sample-Discarding Scenario Optimization}
\label{sec:empiricalcpso}

\begin{figure}[tb]

\begin{subfigure}{\linewidth}
\centering
\resizebox{.75\linewidth}{!}{\begin{tikzpicture}

\definecolor{CaliforniaGold}{rgb}{0.992, 0.71, 0.082}
\definecolor{BerkeleyBlue}{rgb}{0.0, 0.149, 0.463}
\definecolor{darkblue}{rgb}{0.0, 0.13, 0.28}
\definecolor{rose}{rgb}{1.0, 0.0, 0.5}

\begin{axis}[
    axis equal,
    width=\linewidth,
    xlabel={\LARGE$x$},
    ylabel={\LARGE$y$},
    ytick={-3, -2, -1, 0, 1, 2, 3},
    xtick={-3, -2, -1, 0, 1, 2, 3},
    ymin=-3,ymax=3,
    legend style={font=\Large},
    legend style={
    font=\normalsize,
    legend image post style={scale=0.3} 
    },
    tick label style={font=\Large},
]

\def\cx{0.07631373632327826}
\def\cy{-0.6023631398696389}
\def\a{3.015023632833905}      
\def\b{2.0659138483850343}
\def\theta{172.4873294390792}   

\addplot[
    domain=0:360,
    samples=200,
    thick,
    draw=BerkeleyBlue,
]
(
    {\cx + \a*cos(x)*cos(\theta) - \b*sin(x)*sin(\theta)},
    {\cy + \a*cos(x)*sin(\theta) + \b*sin(x)*cos(\theta)}
);

\def\cx{0.07631373632327826}
\def\cy{-0.6023631398696389}
\def\a{2.811815412032315}      
\def\b{1.9266742507620147}
\def\theta{172.4873294390792}   

\addplot[
    domain=0:360,
    samples=200,
    thick,
    draw=BerkeleyBlue,
    opacity=0.25,
    forget plot,
]
(
    {\cx + \a*cos(x)*cos(\theta) - \b*sin(x)*sin(\theta)},
    {\cy + \a*cos(x)*sin(\theta) + \b*sin(x)*cos(\theta)}
);
\addlegendentry{Conformal Prediction}

\def\cx{0.08136178792093557}
\def\cy{-0.5766440175300891}
\def\a{3.0485821589308415}      
\def\b{2.0612407807328226}
\def\theta{173.6149799543568}   

\addplot[
    domain=0:360,
    samples=200,
    ultra thick,
    dotted,
    draw=rose,
]
(
    {\cx + \a*cos(x)*cos(\theta) - \b*sin(x)*sin(\theta)},
    {\cy + \a*cos(x)*sin(\theta) + \b*sin(x)*cos(\theta)}
);
\addlegendentry{Scenario Optimization}

\def\cx{0.08136178792093557}
\def\cy{-0.5766440175300891}
\def\a{2.8373904650862625}      
\def\b{1.9184475381005968}
\def\theta{173.6149799543568}   

\addplot[
    domain=0:360,
    samples=200,
    ultra thick,
    dotted,
    draw=rose,
    opacity=0.25,
]
(
    {\cx + \a*cos(x)*cos(\theta) - \b*sin(x)*sin(\theta)},
    {\cy + \a*cos(x)*sin(\theta) + \b*sin(x)*cos(\theta)}
);

\end{axis}
\end{tikzpicture}}
\caption{Parameters: $\beta=10^{-9}$, $\alpha=\epsilon=0.05$, $K=N=72{,}347$, $k=3{,}231$. With large amounts of data, these methods produce almost identical reachable set estimates.}
\label{fig:diffdataellip}
\end{subfigure}

\vspace{0.5em}
\begin{subfigure}{\linewidth}
\centering
\resizebox{.75\linewidth}{!}{\begin{tikzpicture}

\definecolor{CaliforniaGold}{rgb}{0.992, 0.71, 0.082}
\definecolor{BerkeleyBlue}{rgb}{0.0, 0.149, 0.463}
\definecolor{darkblue}{rgb}{0.0, 0.13, 0.28}
\definecolor{rose}{rgb}{1.0, 0.0, 0.5}

\begin{axis}[
    axis equal,
    width=\linewidth,
    xlabel={\LARGE$x$},
    ylabel={\LARGE$y$},
    ytick={-3, -2, -1, 0, 1, 2, 3},
    xtick={-3, -2, -1, 0, 1, 2, 3},
    ymin=-3,ymax=3,
    legend style={font=\Large},
    legend style={
    font=\normalsize,
    legend image post style={scale=0.3} 
    },
    tick label style={font=\Large},
]

\def\cx{0.1130370918149331}
\def\cy{-0.5925212696033573}
\def\a{2.9902595284692657}      
\def\b{2.0724595962685335}
\def\theta{175.21429128311888}   

\addplot[
    domain=0:360,
    samples=200,
    thick,
    draw=BerkeleyBlue,
]
(
    {\cx + \a*cos(x)*cos(\theta) - \b*sin(x)*sin(\theta)},
    {\cy + \a*cos(x)*sin(\theta) + \b*sin(x)*cos(\theta)}
);

\def\cx{0.1130370918149331}
\def\cy{-0.5925212696033573}
\def\a{2.8044215722138355}      
\def\b{1.9436608575217147}
\def\theta{175.21429128311888}   

\addplot[
    domain=0:360,
    samples=200,
    thick,
    draw=BerkeleyBlue,
    opacity=0.25,
    forget plot,
]
(
    {\cx + \a*cos(x)*cos(\theta) - \b*sin(x)*sin(\theta)},
    {\cy + \a*cos(x)*sin(\theta) + \b*sin(x)*cos(\theta)}
);
\addlegendentry{Conformal Prediction}

\def\cx{0.06711529739561009}
\def\cy{-0.585095503268902}
\def\a{3.023407831810613}      
\def\b{2.0400635435934373}
\def\theta{171.17073824148474}   

\addplot[
    domain=0:360,
    samples=200,
    ultra thick,
    dotted,
    draw=rose,
]
(
    {\cx + \a*cos(x)*cos(\theta) - \b*sin(x)*sin(\theta)},
    {\cy + \a*cos(x)*sin(\theta) + \b*sin(x)*cos(\theta)}
);
\addlegendentry{Scenario Optimization}

\def\cx{0.06711529739561009}
\def\cy{-0.585095503268902}
\def\a{3.0163519538881705}      
\def\b{2.0353025453694564}
\def\theta{171.17073824148474}   

\addplot[
    domain=0:360,
    samples=200,
    ultra thick,
    dotted,
    draw=rose,
    opacity=0.25,
]
(
    {\cx + \a*cos(x)*cos(\theta) - \b*sin(x)*sin(\theta)},
    {\cy + \a*cos(x)*sin(\theta) + \b*sin(x)*cos(\theta)}
);

\end{axis}
\end{tikzpicture}}
\caption{Parameters: $\beta=10^{-9}$, $\alpha=\epsilon=0.05$, $K=N=1{,}047$, $k=6$. To account for coverage uncertainty, the conformal reachable set volume decreases from 19.56 to 16.75. Since the scenario program does not need to account for this uncertainty, the volume of the reachable set decreases from 19.56 to 19.53.}
\label{fig:relaxeduse}
\end{subfigure}

\caption{Comparison of conformal prediction and scenario optimization with sample discarding applied to reachable sets of the Duffing oscillator. Dark ellipsoids denote the initial reachable set, while light ellipsoids show the resulting set of the conformal procedure and scenario program in Def. \ref{def:removefromset}.
\vspace{-0.1cm}}
\label{fig:conformaltoscenario}

\end{figure}
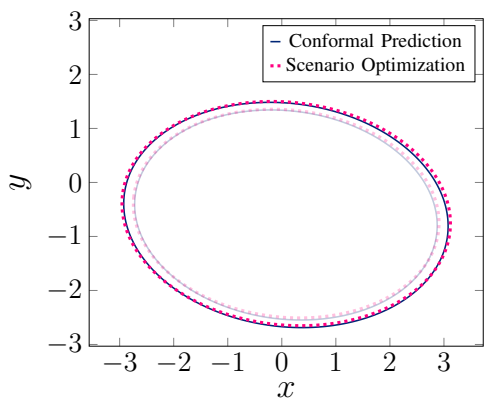
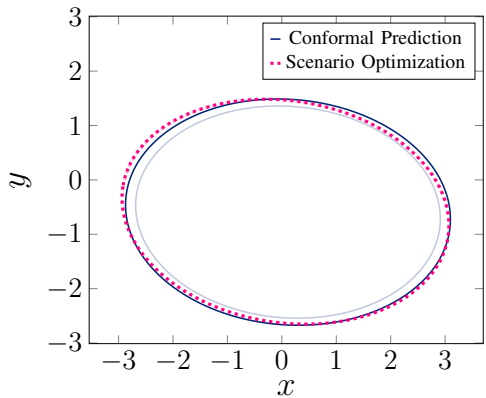

\addtolength{\textheight}{-8mm}   

Directly solving Eq. \eqref{eq:calibcard} and Eq. \eqref{eq:alpha} yields values rarely used in conformal or scenario settings (e.g., $\alpha = 10^{-6}$ or $K=2,000,000$). Instead, we demonstrate that as the parameterization approaches these values, the two methods produce nearly indistinguishable reachable sets. 

First, we define a conformal program with $\beta=10^{-9}$, $\alpha=0.05$, and calibration set size $K$ is calculated using code provided by \cite{angelopoulos2022gentleintroductionconformalprediction}. Since training-conditional coverage for any fixed calibration set is not exactly $1- \alpha$, we instead compute the number of calibration samples required to achieve coverage close to this value. Following \cite{angelopoulos2022gentleintroductionconformalprediction}, we target $1-\alpha \pm 0.005$, resulting in $K=72,347$. For the corresponding scenario program, we take $N = K$, and $\epsilon=\alpha$, which results in $k=3,231$ discarded samples. As seen in Fig. \ref{fig:diffdataellip}, both methods produce nearly identical reachable sets when a large amount of i.i.d. data is available.

We now relax the parameterization to reflect more typical conformal usage. Specifically, we take $\beta=10^{-9}$, $\alpha=0.05$, and compute $K = 1,047$ to achieve $1-\alpha \pm 0.05$ coverage. The corresponding scenario program is defined with $\beta=10^{-9}$, $\epsilon=\alpha$, $N=K$, such that $k = 6$ scenarios are discarded. Fig. \ref{fig:relaxeduse} presents the resulting reachable sets: the conformal set shrinks significantly to account for coverage uncertainty, while the scenario set remains nearly unchanged.

\subsubsection{Sample and Computational Complexity}
In order to avoid excessive uncertainty in the calibration step, achieving PAC guarantees with conformal prediction requires far more i.i.d. data than scenario optimization. As a result, either the conformal accuracy suffers, or the resulting reachable set shrinks substantially, as seen in Fig. \ref{fig:relaxeduse}. 
Additionally, the main benefit of conformal prediction is the efficient computation of the quantile and prediction set. However, the one-dimensional scenario program defined in Def. \ref{def:removefromset} mirrors this runtime speed.
As a result, scenario optimization is likely favorable for computing PAC guarantees in data-driven reachability analysis.

\textbf{Remark 4.} As mentioned in Sec. \ref{sec:summethods}, there does not exist a formal relationship between conformal prediction and other scenario optimization methods.
However, a comparison on accuracy could be done for reachable sets given a fixed confidence level (e.g., as done for the holdout method and scenario optimization in \cite{dietrich2025datadrivenreachabilityscenariooptimization}).

\vspace*{-1mm}
\section{Conclusion}
In this paper, we establish a formal connection between the PAC guarantees of conformal prediction, scenario optimization, and the holdout method by defining an appropriate conformal score function, error rate, and sample complexity. Additionally, we demonstrate the practical implications of these relationships through an empirical case study on data-driven reachable sets. 
Empirical conformal coverage yields PAC guarantees equivalent to the holdout method, but enforcing training-conditional coverage in the conformal procedure incurs a high sample cost. As a result, conformal prediction faces a tradeoff between accuracy and reachable set coverage, a limitation absent in scenario optimization. 



\bibliography{ref} 

@InProceedings{l4dc-comparison,
  title = 	 {Verification of neural reachable tubes via scenario optimization and conformal prediction},
  author =       {Lin, Albert and Bansal, Somil},
  booktitle = 	 {Proceedings of the 6th Annual Learning for Dynamics Control Conference},
  pages = 	 {719--731},
  year = 	 {2024},
  volume = 	 {242},
  series = 	 {Proceedings of Machine Learning Research},
  month = 	 {15--17 Jul},
  publisher =    {PMLR}
}

@article{scenario-sample-discard,
	author = {Campi, M. C. and Garatti, S.},
	journal = {Journal of Optimization Theory and Applications},
	number = {2},
	pages = {257--280},
	title = {A Sampling-and-Discarding Approach to Chance-Constrained Optimization: Feasibility and Optimality},
	volume = {148},
	year = {2011}
}

@INPROCEEDINGS{dietrich2025datadrivenreachabilityscenariooptimization,
  author={Elizabeth Dietrich and Devonport, Rosalyn and Tu, Stephen and Arcak, Murat},
  booktitle={2025 IEEE 64th Conference on Decision and Control (CDC)}, 
  title={Data-Driven Reachability with Scenario Optimization and the Holdout Method}, 
  year={2025},
  pages={3925-3931}
  }

@article{angelopoulos2022gentleintroductionconformalprediction,
      title={A Gentle Introduction to Conformal Prediction and Distribution-Free Uncertainty Quantification}, 
      author={Anastasios N. Angelopoulos and Stephen Bates},
      year={2022},
      journal={arXiv:2107.07511}
}

@article{Jowett1963,
  author  = {G. H. Jowett},
  title   = {The Relationship Between the Binomial and {F} Distributions},
  journal = {Journal of the Royal Statistical Society. Series D (The Statistician)},
  volume  = {13},
  number  = {1},
  pages   = {55--57},
  year    = {1963},
  doi     = {10.2307/2986663}
}

@InProceedings{pmlr-v25-vovk12,
  title = 	 {Conditional Validity of Inductive Conformal Predictors},
  author = 	 {Vovk, Vladimir},
  booktitle = 	 {Proceedings of the Asian Conference on Machine Learning},
  pages = 	 {475--490},
  year = 	 {2012},
  volume = 	 {25},
  series = 	 {Proceedings of Machine Learning Research},
  month = 	 {04--06 Nov},
  publisher =    {PMLR}
}

@article{dembo1991scenario,
  title={{Scenario optimization}},
  author={Dembo, Ron S},
  journal={Annals of Operations Research},
  volume={30},
  pages={63--80},
  year={1991},
  publisher={Springer}
}

@article{angelopoulos2025theoreticalfoundationsconformalprediction,
      title={Theoretical Foundations of Conformal Prediction}, 
      author={Anastasios N. Angelopoulos and Rina Foygel Barber and Stephen Bates},
      year={2025},
      journal={arxiv:2411.11824}
}

@article{lindemann2025formalverificationcontrolconformal,
      title={{Formal Verification and Control with Conformal Prediction}}, 
      author={Lars Lindemann and Yiqi Zhao and Xinyi Yu and George J. Pappas and Jyotirmoy V. Deshmukh},
      year={2025},
      journal={arXiv:2409.00536}
}

@book{tempobook,
author = {Tempo, Roberto and Calafiore, Giuseppe and Dabbene, Fabrizio},
title = {Randomized Algorithms for Analysis and Control of Uncertain Systems: With Applications},
year = {2012},
isbn = {1447146093},
publisher = {Springer},
edition = {2nd}
}

@book{hardtrecht2022patterns,
  author = {Moritz Hardt and Benjamin Recht},
  title = {Patterns, predictions, and actions: Foundations of machine learning},
  year = {2022},
  publisher = {Princeton University Press}
}

@article{langford2005tutorial,
  title={Tutorial on Practical Prediction Theory for Classification},
  author={Langford, John},
  journal={Journal of Machine Learning Research},
  volume={6},
  pages={273--306},
  year={2005}
}

@INPROCEEDINGS{osullivan2025bridgingconformalpredictionscenario,
  author={O'Sullivan, Niall and Romao, Licio and Margellos, Kostas},
  booktitle={2025 IEEE 64th Conference on Decision and Control (CDC)}, 
  title={Bridging conformal prediction and scenario optimization}, 
  year={2025},
  pages={6114-6121}}

@article{pacbound,
author = {Valiant, L. G.},
title = {{A theory of the learnable}},
year = {1984},
issue_date = {Nov. 1984},
publisher = {Association for Computing Machinery},
address = {New York, NY, USA},
volume = {27},
number = {11},
issn = {0001-0782},
journal = {Commun. ACM},
month = nov,
pages = {1134–1142},
numpages = {9}
}

@INPROCEEDINGS{devonport-2021-cdc,
  author={Devonport, Alex and Yang, Forest and El Ghaoui, Laurent and Arcak, Murat},
  booktitle={2021 60th IEEE Conference on Decision and Control (CDC)}, 
  title={{Data-Driven Reachability Analysis with Christoffel Functions}}, 
  year={2021},
  volume={},
  number={},
  pages={5067-5072},
  doi={10.1109/CDC45484.2021.9682860}}

@InProceedings{Griffioen2023,
  author    = {Griffioen, Paul and Arcak, Murat},
  booktitle = {2023 62nd IEEE Conference on Decision and Control (CDC)},
  title     = {{Data-Driven Reachability Analysis for Gaussian Process State Space Models}},
  year      = {2023},
  pages     = {4100-4105},
  doi       = {10.1109/CDC49753.2023.10383270},
}

@InProceedings{Hashemi2023,
  author    = {Hashemi, Navid and Qin, Xin and Lindemann, Lars and Deshmukh, Jyotirmoy V.},
  booktitle = {2023 62nd IEEE Conference on Decision and Control (CDC)},
  title     = {{Data-Driven Reachability Analysis of Stochastic Dynamical Systems with Conformal Inference}},
  year      = {2023},
  pages     = {3102-3109},
  doi       = {10.1109/CDC49753.2023.10384213},
}

@InProceedings{Tebjou2023,
  author    = {Tebjou, Abdelmouaiz and Frehse, Goran and Chamroukhi, Fa"{i}cel},
  booktitle = {Proceedings of the Twelfth Symposium on Conformal and Probabilistic Prediction with Applications},
  title     = {{Data-driven Reachability using Christoffel Functions and Conformal Prediction}},
  year      = {2023},
  month     = {13--15 Sep},
  pages     = {194--213},
  publisher = {PMLR},
  series    = {Proceedings of Machine Learning Research},
  volume    = {204},
}

@InProceedings{pmlr-v242-dietrich24a,
  title = 	 {{Nonconvex Scenario Optimization for Data-Driven Reachability}},
  author =       {Dietrich, Elizabeth and Devonport, Alex and Arcak, Murat},
  booktitle = 	 {Proceedings of the 6th Annual Learning for Dynamics \& Control Conference},
  pages = 	 {514--527},
  year = 	 {2024},
  volume = 	 {242},
  series = 	 {Proceedings of Machine Learning Research},
  month = 	 {15--17 Jul},
  publisher =    {PMLR}
}

@Article{Shafer2008,
  author     = {Shafer, Glenn and Vovk, Vladimir},
  journal    = {J. Mach. Learn. Res.},
  title      = {{A Tutorial on Conformal Prediction}},
  year       = {2008},
  issn       = {1532-4435},
  month      = jun,
  pages      = {371–421},
  volume     = {9},
  issue_date = {6/1/2008},
  numpages   = {51},
  publisher  = {JMLR.org},
}

@article{nonconvex24,
	author = {Garatti, Simone and Campi, Marco C.},
	date = {2024/04/08},
	doi = {10.1007/s10107-024-02074-3},
	id = {Garatti2024},
	isbn = {1436-4646},
	journal = {Mathematical Programming},
	title = {{Non-convex scenario optimization}},
	year = {2024}}

@book{campi2018introduction,
  title={Introduction to the Scenario Approach},
  author={Campi, Marco C. and Garatti, Simone},
  year={2018},
  publisher={Society for Industrial and Applied Mathematics},
  doi={10.1137/1.9781611975444}
}

@book{hastie2009elements,
  title={The Elements of Statistical Learning: Data Mining, Inference, and Prediction},
  author={Hastie, Trevor and Tibshirani, Robert and Friedman, Jerome},
  year={2009},
  publisher={Springer},
  series={Springer Series in Statistics},
  edition={2nd},
  isbn={978-0-387-84857-0},
  doi={10.1007/978-0-387-84858-7}
}

@InProceedings{pmlr-v120-devonport20a,
  title = 	 {Estimating Reachable Sets with Scenario Optimization},
  author =       {Devonport, Alex and Arcak, Murat},
  booktitle = 	 {Proceedings of the 2nd Conference on Learning for Dynamics and Control},
  pages = 	 {75--84},
  year = 	 {2020},
  volume = 	 {120},
  series = 	 {Proceedings of Machine Learning Research},
  month = 	 {10--11 Jun},
  publisher =    {PMLR},

}

@InProceedings{Akametalu2014,
  author    = {Akametalu, Anayo K. and Fisac, Jaime F. and Gillula, Jeremy H. and Kaynama, Shahab and Zeilinger, Melanie N. and Tomlin, Claire J.},
  booktitle = {53rd IEEE Conference on Decision and Control},
  title     = {{Reachability-based safe learning with Gaussian processes}},
  year      = {2014},
  pages     = {1424-1431},
  doi       = {10.1109/CDC.2014.7039601},
}

@ARTICLE{alanwar23-TAC,
  author={Alanwar, Amr and Koch, Anne and Allgöwer, Frank and Johansson, Karl Henrik},
  journal={IEEE Transactions on Automatic Control}, 
  title={{Data-Driven Reachability Analysis From Noisy Data}}, 
  year={2023},
  volume={68},
  number={5},
  pages={3054--3069},
  doi={10.1109/TAC.2023.3257167}}

@Article{Park2024,
  author   = {Park, Hyunsang and Vijay, Vishnu and Hwang, Inseok},
  journal  = {IEEE Control Systems Letters},
  title    = {{Data-Driven Reachability Analysis for Nonlinear Systems}},
  year     = {2024},
  pages    = {2661-2666},
  volume   = {8},
  doi      = {10.1109/LCSYS.2024.3510595},
}

@INPROCEEDINGS{campicdc25,
  author={Carè, Algo and Campi, Marco C. and Garatti, Simone},
  booktitle={2025 IEEE 64th Conference on Decision and Control (CDC)}, 
  title={Post-Design Verification in the Scenario Approach}, 
  year={2025}}

@INPROCEEDINGS{garatticdc23,
  author={Garatti, Simone and Campi, Marco C.},
  booktitle={2023 62nd IEEE Conference on Decision and Control (CDC)}, 
  title={Scenario Optimization with Constraint Relaxation in a Non-Convex Setup: A Flexible and General Framework for Data-Driven Design}, 
  year={2023}}

@INPROCEEDINGS{Coppola2024,
  author={Coppola, Rudi and Peruffo, Andrea and Lindemann, Lars and Mazo, Manuel},
  booktitle={European Control Conference (ECC)}, 
  title={Scenario Approach and Conformal Prediction for Verification of Unknown Systems via Data-Driven Abstractions}, 
  year={2024},
  pages={558--563},
  doi={10.23919/ECC64448.2024.10590827}}
\bibliographystyle{IEEEtran}

\end{document}